\newtheorem{theorem}{Theorem}
\newtheorem{lemma}[theorem]{Lemma}
\newtheorem{corollary}[theorem]{Corollary}
\newtheorem{observation}[theorem]{Observation}
\newcommand{\wcol}{\text{wcol}}
\newcommand{\GG}{\mathcal{G}}
\newcommand{\FF}{\mathcal{F}}
\newcommand{\CC}{\mathcal{C}}
\newcommand{\KK}{\mathcal{K}}
\newcommand{\PP}{\mathcal{P}}
\title{A note on local search for hitting sets}
\author{Zdeněk Dvořák\thanks{Charles University, Prague, Czech Republic.
E-mail: {\tt rakdver@iuuk.mff.cuni.cz}.  Supported by the European Research Council (ERC) under the
European Union’s Horizon 2020 research and innovation programme (grant agreement No. 810115).}}
\date{}
\begin{document}
\maketitle

\begin{abstract}
Let $\pi$ be a property of pairs $(G,Z)$, where $G$ is a graph and $Z\subseteq V(G)$.
In the \emph{minimum $\pi$-hitting set problem}, given an input graph $G$,
we want to find a smallest set $X\subseteq V(G)$ such that $X$ intersects every set $Z\subseteq V(G)$ such that
$(G,Z)$ has the property $\pi$.  An important special case is that $\pi$ is satisfied by $(G,Z)$ exactly if $G[Z]$ is isomorphic to
one of graphs in a finite set $\FF$; in this \emph{minimum $\FF$-hitting set} problem, $X$ needs to hit all appearances of the graphs from
$\FF$ as induced subgraphs of $G$.  In this note, we show that the local search argument of Har-Peled and Quanrud
gives a PTAS for the minimum $\mathcal{F}$-hitting set problem for graphs from any class with polynomial expansion.
Moreover, we argue that the local search argument applies more generally to all properties $\pi$
such that one can test whether $X$ is a $\pi$-hitting set in polynomial time and $G[Z]$ has bounded diameter whenever $(G,Z)$
satisfies $\pi$; this is a common generalization of the minimum $\mathcal{F}$-hitting set problem and minimum $r$-dominating
set problem.  Finaly, we note that the analogous claim also holds for the dual problem of finding the maximum number of
disjoint sets $Z$ such that $(G,Z)$ has the property $\pi$; this generalizes maximum $F$-matching, maximum induced $F$-matching,
and maximum $r$-independent set problems.
\end{abstract}

Let $H$ and $G$ be graphs.  If a graph isomorphic to $H$ is obtained from a subgraph of $G$ by contracting pairwise vertex-disjoint
subgraphs, each of radius at most $r$, we say that $H$ is an \emph{$r$-shallow minor} of $G$.  The \emph{density} of a graph $H$
is $|E(H)|/|V(H)|$.  For a function $f:\mathbb{Z}_0^+\to\mathbb{R}_0^+$, a graph $G$ has \emph{expansion bounded by $f$}
if for every non-negative integer $r$, every $r$-shallow minor of $G$ has density at most $f(r)$.
A class of graphs $\GG$ has expansion bounded by $f$ if all graphs from $\GG$ do.
We say $\GG$ has \emph{bounded expansion} if $\GG$ has expansion bounded by some function $f$, and \emph{polynomial expansion}
if $\GG$ has expansion bounded by a polynomial.

Let $\pi$ be a property of pairs $(G,Z)$, where $G$ is a graph and $Z\subseteq V(G)$ is non-empty.
The \emph{diameter} of $\pi$ is the maximum diameter of $G[Z]$ over all pairs $(G,Z)$ satisfying the property $\pi$,
or $\infty$ if there exist such pairs with the diameter of $G[Z]$ arbitrarily large (this includes the case that $G[Z]$ is disconnected
for some $(G,Z)$ satisfying $\pi$).  For a graph $G$, a set $X\subseteq V(G)$ is a \emph{$\pi$-hitting set} if
$X$ intersects every $Z\subseteq V(G)$ such that $(G,Z)$ satisfies $\pi$.  Let $\gamma_\pi(G)$ denote the minimum size of a $\pi$-hitting set in $G$.
For a graph class $\GG$, we say that \emph{$\pi$-hitting set is efficiently testable in $\GG$} if there
exists a polynomial-time algorithm that for a graph $G\in\GG$ and a set $X\subseteq V(G)$ decides whether 
$X$ is a $\pi$-hitting set. 
Let us mention two important examples:
\begin{itemize}
\item For a finite set $\FF$ of graphs, if $\pi$ is the property that $G[Z]$ is isomorphic to a graph in $\FF$,
we obtain the \emph{$\FF$-hitting set} problem of finding a smallest set $X\subseteq V(G)$ such that $G-X$ is \emph{$\FF$-free},
i.e., does not contain any induced subgraph isomorphic to an element of $\FF$. 
The diameter of $\pi$ is equal to the maximum of diameters of the graphs in $\FF$.
\item Let $r$ be a positive integer, and let $\pi$ be satisfied by $(G,Z)$ iff there exists $z\in V(G)$ such that $Z$ consists
exactly of vertices at distance at most $r$ from $z$ in $G$.  Then $X$ is a $\pi$-hitting set iff it is an \emph{$r$-dominating set},
i.e., every vertex of $G$ is at distance at most $r$ from $G$.  The diameter of $\pi$ is at most $2r$.
\end{itemize}
In both cases, $\pi$-hitting set is efficiently testable in all graphs.

For any $\pi$ such that $\pi$-hitting set is efficiently testable and a positive integer $c$,
one can use the following \emph{$c$-local search} heuristic to find a (hopefully small) $\pi$-hitting set:
\begin{enumerate}
\item Let $X=V(G)$.
\item Iterate over all sets $Y\subseteq V(G)$ of size at most $c$ such that $|X\triangle Y|<|X|$:
\begin{itemize}
\item If $X \triangle Y$ is a $\pi$-hitting set, then let $X\colonequals X \triangle Y$ and repeat the step 2.
\end{itemize}
\end{enumerate}
Of course, there are in general no guarantees that the result of local search is close to optimal.  However, Har-Peled and Quanrud~\cite{har2015approximationjrnl}
proved the following claim for minimum vertex cover ($\{K_2\}$-hitting set) and for minimum $r$-dominating set:
For every class of graphs $\GG$ with polynomial expansion and every $\varepsilon>0$, there exists $c_\varepsilon$
such that $c_\varepsilon$-local search results in a $\pi$-hitting set of size at most $(1+\varepsilon)\gamma_\pi(G)$
for every graph $G\in\GG$.  That is, local search gives a PTAS when resticted to any graph class with polynomial expansion.
The main result of this note is the following common generalization of these results.
\begin{theorem}\label{thm-main}
Let $\GG$ be a class of graphs with polynomial expansion.  For every property $\pi$ of finite diameter and every $\varepsilon>0$,
there exists $c$ such that $c$-local search results in a $\pi$-hitting set of size at most $(1+\varepsilon)\gamma_\pi(G)$.
\end{theorem}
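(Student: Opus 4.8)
The plan is to follow the standard local search analysis framework used by Har-Peled and Quanrud, adapted to the abstract property $\pi$. Let $X$ be the output of $c$-local search and let $X^\ast$ be a minimum $\pi$-hitting set, with $\gamma_\pi(G) = |X^\ast|$. We may assume $X$ and $X^\ast$ are disjoint (vertices in both play no role in the argument; formally one works with $X \setminus X^\ast$ and $X^\ast \setminus X$). The key combinatorial object is an \emph{exchange graph} (or witness graph) $W$ on vertex set $X \cup X^\ast$: for every set $Z$ with $(G,Z)$ satisfying $\pi$, since $X$ is a $\pi$-hitting set there is some $x \in X \cap Z$, and since $X^\ast$ is a $\pi$-hitting set there is some $y \in X^\ast \cap Z$; we want to certify that a local swap removing $x$ can be ``paid for'' using nearby vertices of $X^\ast$. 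The crucial point is that $G[Z]$ has diameter at most $d := \mathrm{diam}(\pi) < \infty$, so all such $x$ and $y$ lie within distance $d$ of each other in $G$.

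The main step is to construct $W$ so that (i) it has bounded maximum degree or, more precisely, bounded density on all subgraphs, and (ii) it has the ``exchange property'': for every $S^\ast \subseteq X^\ast$, the set $(X \setminus N_W(S^\ast)) \cup S^\ast$ is still a $\pi$-hitting set. Property (ii) combined with local optimality of $X$ (no improving swap of size $\le c$ exists) forces $|X| \le |X^\ast|$-type bounds via the standard planar/bounded-expansion separator-or-sparsity argument: one passes to a $1$-shallow topological minor or a shallow minor of $G$ on vertex set roughly $X \cup X^\ast$ and uses that graphs from $\GG$ with polynomial expansion admit, for this derived bounded-degree graph, the property that subgraphs of size $k$ have $O(k)$ edges together with the sublinear-separator / ``local exchange'' combinatorial lemma (the Har-Peled--Quanrud lemma, which says that a graph with this sparsity property has, for large $c$, a fractional or integral system of small exchange sets witnessing $|X| \le (1+\varepsilon)|X^\ast|$). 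Concretely: build an auxiliary bipartite-ish incidence structure where each $x \in X$ is connected to the $X^\ast$-vertices that could replace it in some $Z$ through $x$; because every relevant $Z$ has diameter $\le d$, the edge $xy$ of $W$ can be realized by a path of length $\le d$ in $G$ whose internal vertices we can route, so $W$ is a $d$-shallow minor (or shallow topological minor after a standard cleanup) of a subgraph of $G$, hence has density bounded by $f(d)$ where $f$ is the polynomial expansion bound; pass to the appropriate shallow minor so that $W$ itself is sparse in the hereditary sense.

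Once $W$ is sparse (every subgraph on $k$ vertices has at most $c' k$ edges for a constant $c'$ depending only on $f$ and $d$), the analysis is the abstract local-search lemma: such a graph on the bipartition $(X, X^\ast)$ with the exchange property has $|X| \le (1+\varepsilon)|X^\ast|$ provided $c \ge c_\varepsilon$, where $c_\varepsilon$ depends only on $c'$ and $\varepsilon$. This is exactly the lemma proved by Har-Peled and Quanrud (via the existence of small separators in classes of polynomial expansion, or equivalently via a weighted region/charging argument), and it is purely graph-theoretic once $W$ and the exchange property are in hand, so I would invoke it essentially verbatim. The only thing that needs checking for the generalization is that the exchange property holds for arbitrary finite-diameter $\pi$: if $S^\ast \subseteq X^\ast$ and $Z$ satisfies $\pi$ with $Z \cap S^\ast = \emptyset$, then pick any $y \in Z \cap X^\ast$ (exists, $X^\ast$ is hitting); $y \notin S^\ast$; now I need $Z$ to be hit by $(X \setminus N_W(S^\ast))$, i.e., some $x \in Z \cap X$ with $x \notin N_W(S^\ast)$ — this is arranged by defining the $W$-neighborhood of $x$ to include \emph{every} $X^\ast$-vertex appearing together with $x$ in a common $Z$, so that if all of $Z \cap X$ were in $N_W(S^\ast)$ we would still only need one witness; the clean way is to route it through a single designated witness per $Z$, which is the standard move and where one must be slightly careful (choosing for each $Z$ one vertex $x_Z \in X \cap Z$ and one $y_Z \in X^\ast \cap Z$, adding the edge $x_Z y_Z$ to $W$, and checking that removing $N_W(S^\ast)$ from $X$ and adding $S^\ast$ still hits every $Z$: if $y_Z \in S^\ast$ then $S^\ast$ hits $Z$; otherwise $x_Z \notin N_W(S^\ast)$ unless $x_Z$ is the witness for some other $Z'$ with $y_{Z'} \in S^\ast$ — here one needs the multiplicity bound, i.e., that $W$ genuinely has bounded degree, which follows from the shallow-minor density bound after passing to the ``inclusion-minimal'' sets or bounding how many $Z$'s share a witness; this bounded-multiplicity / bounded-degree verification is the main technical obstacle). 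The finite diameter hypothesis is used exactly once but essentially: it guarantees the witness edges $x_Z y_Z$ correspond to bounded-length paths in $G$, which is what makes $W$ a shallow minor and hence sparse. Efficient testability of $\pi$-hitting set is used only to make local search a polynomial-time algorithm, not in the approximation analysis.
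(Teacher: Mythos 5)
There is a genuine gap, and you put your finger on it yourself when you call the bounded-multiplicity verification ``the main technical obstacle'': for the witness graph $W$ as you define it, that obstacle cannot be overcome. The claim that $W$ is a $d$-shallow minor of $G$ because each edge $x_Zy_Z$ is realized by a path of length at most $d$ is false: a shallow minor requires the contracted branch sets to be pairwise vertex-disjoint, and the realizing paths of distinct witness edges can all pass through the same vertices. Already in a star $K_{1,n}$ (a tree, so of excellent expansion) all leaves are pairwise at distance $2$, so the ``distance at most $d$'' graph on a vertex subset can be an arbitrarily large clique for $d\ge 2$; consequently $W$ need not be sparse, and the Har-Peled--Quanrud abstract local-search lemma cannot be invoked. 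Relatedly, your exchange property in the form ``$(X\setminus N_W(S^\ast))\cup S^\ast$ is a hitting set'' fails exactly in the scenario you flag: a set $Z$ disjoint from $S^\ast$ may have all of its $X$-representatives adjacent in $W$ to $S^\ast$ via \emph{other} constraints $Z'$, and nothing bounds how often this happens. Your proposal works as stated only for $d=1$ (vertex cover), where $W$ is literally a subgraph of $G$.

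The paper's proof exists precisely to repair this. It fixes a linear order $\prec$ with $\wcol_r(G)\le b$ (Theorem~\ref{thm-zhu}) and, crucially, does \emph{not} compare $A$ with the optimum $O$ but with an enlargement $O\cup O'$, where $O'$ is the set of $(r,\prec,m,O)$-rich vertices; Observation~\ref{obs-fewrich} shows $|O'|\le\tfrac{b}{m}|O|$, so the enlargement costs only an $\varepsilon$-fraction. It then replaces your single-vertex witnesses by the sets $C_v$ built from weak-reachability sets $R_{r,\prec}(\cdot)$, and Lemma~\ref{lemma-main} shows this collection is $(bm+1,2r)$-shallow, so its \emph{packing graph} $H$ (not a distance power) has polynomial expansion by Corollary~\ref{cor-preserve}; the same lemma delivers the adjacency property in the correct direction (for every $Z$ there is $u\in Z\cap(O\cup O')$ with $C_uC_v\in E(H)$ for all $v\in Z\cap A$), after which the cover-with-small-excess argument (Lemma~\ref{lemma-break}) closes the counting. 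The rich-vertex enlargement and the shallow-packing construction are the actual content of the proof, and they are absent from your proposal.
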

It is also natural to consider the dual notion.  A set $\PP$ of pairwise disjoint subsets of vertices of $G$ is a \emph{$\pi$-packing}
if $(G,P)$ has the property $\pi$ for every $P\in\PP$; it is an \emph{induced $\pi$-packing} if additionally no edge of $G$
has ends in distinct elements of $\PP$.  For example:
\begin{itemize}
\item For a graph $F$, if $\pi$ is the property that $G[P]$ is isomorphic to $F$,
we obtain the \emph{$F$-matching} (and \emph{induced $F$-matching}) problem
of finding the maximum number of disjoint (and non-adjacent) copies of $F$ in the input graph.
\item Let $r$ be a positive integer, and let $\pi$ be satisfied by $(G,Z)$ iff there exists $z\in V(G)$ such that $Z$ consists
exactly of vertices at distance at most $r$ from $z$ in $G$.  Then a $\pi$-packing corresponds to a $2r$-independent set (a set
of vertices is \emph{$t$-independent} if the distance between any distinct elements is greater than $t$), and an induced $\pi$-packing
corresponds to a $(2r+1)$-independent set.
\end{itemize}
Let $\alpha_\pi(G)$ denote the maximum size of a $\pi$-packing and $\alpha'_\pi(G)$
the maximum size of an induced $\pi$-packing in $G$.
For a graph class $\GG$, we say that \emph{(induced) $\pi$-packing extension is tractable in $\GG$} if for every positive integer
$c$, there exists a polynomial-time algorithm that for a graph $G\in\GG$ and a set $T\subseteq V(G)$ finds
an (induced) $\pi$-packing $\PP_0$ of size at least $c$ and such $T\cap\bigcup \PP_0=\emptyset$, or decides no such $\PP_0$ exists.
If $\pi$ has this property, we can run the following maximization variant of the $c$-local search algorithm:
\begin{enumerate}
\item Let $\PP=\emptyset$.
\item Iterate over all $Y\subseteq \PP$ of size less than $c$:
\begin{itemize}
\item If $\PP\setminus Y$ can be extended to an (induced) $\pi$-packing $\PP'$ of size greater than $|\PP|$,
then let $\PP\colonequals \PP'$ and repeat the step 2.
\end{itemize}
\end{enumerate}
Step 2 is implemented by testing whether there exists an (induced) $\pi$-packing $\PP_0$ of size $|Y|+1$ disjoint from $\bigcup_{P\in \PP\setminus Y} P$
(or this set and its neighbors in the induced case).
As our second result, we note that this algorithm again gives a PTAS in any class of graphs with polynomial
expansion; the argument is less involved and essentially implicitly appears in~\cite{har2015approximationjrnl}.
\begin{theorem}\label{thm-mainpack}
Let $\GG$ be a class of graphs with polynomial expansion.  For every property $\pi$ of finite diameter and every $\varepsilon>0$,
there exists $c$ such that the (induced) maximization $c$-local search results in an (induced) $\pi$-packing of size at least
$(1-\varepsilon)\alpha_\pi(G)$ in the non-induced case and at least $(1-\varepsilon)\alpha'_\pi(G)$ in the induced case.
\end{theorem}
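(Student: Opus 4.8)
The plan is to fix a local optimum and a largest packing, compare them through a bipartite \emph{exchange graph}, show this graph inherits polynomial expansion from $\GG$, and then run the standard separator-based analysis of local search; the only genuinely new ingredient is the middle step.

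Let $d$ be the (finite) diameter of $\pi$, so that $G[Z]$ is connected of radius at most $d$ whenever $(G,Z)$ satisfies $\pi$. Let $\PP$ be the (induced) packing returned by $c$-local search --- so no $c$-local move improves it --- and let $\PP^\ast$ be a maximum (induced) $\pi$-packing; deleting common members we may assume $\PP\cap\PP^\ast=\emptyset$, and we suppose for contradiction that $|\PP|<(1-\varepsilon)|\PP^\ast|$. Let $H$ be the bipartite \emph{exchange graph} with parts $\PP$ and $\PP^\ast$ in which $P\in\PP$ and $Q\in\PP^\ast$ are adjacent exactly when they \emph{conflict}: $P\cap Q\neq\emptyset$ in the non-induced case, and $P\cap Q\neq\emptyset$ or some edge of $G$ joins $P$ to $Q$ in the induced case. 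Local optimality yields a \emph{local exchange property}: for every $S\subseteq\PP^\ast$ with $|N_H(S)|\le c-1$ we have $|S|\le|N_H(S)|$, for otherwise $(\PP\setminus N_H(S))\cup S$ would be an (induced) $\pi$-packing larger than $\PP$ reachable from $\PP$ by the $c$-local move with $Y=N_H(S)$ (a member of $\PP\setminus N_H(S)$ is non-adjacent in $H$ to all of $S$, hence compatible with every member of $S$). Taking $S=\{Q\}$ also shows every $Q\in\PP^\ast$ has a neighbour in $H$.

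The crux is that $H$ belongs to a class of polynomial expansion depending only on $\GG$ and $d$; granting this, $H$ and all its subgraphs have density at most some $\rho$ and, by the Dvořák--Norin / Esperet--Raymond theorem, balanced separators of size $O(N^{1-\beta})$ on $N$ vertices, with $\rho$ and $\beta$ depending only on $\GG$ and $d$. I would not try to realize shallow minors of $H$ directly as bounded-depth minors of $G$: a member of $\PP$ and a member of $\PP^\ast$ may overlap in $G$, so the corresponding branch sets cannot be made disjoint without risking disconnection. Instead I would work with weak colouring numbers, using that polynomial expansion of $\GG$ is equivalent to $\wcol_r(G)$ being polynomially bounded in $r$ for $G\in\GG$. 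Fix an ordering $\sigma$ of $V(G)$ realizing a small value of $\wcol_{(2d+1)(r+1)}(G)$, and order $\PP\cup\PP^\ast$ by the $\sigma$-position of the $\sigma$-smallest vertex in each member. Suppose $R'$ is weakly $r$-reachable from $R$ in $H$ with this order, along a path $R=R_0,\dots,R_t=R'$ with $t\le r$ and $R'$ not after any $R_i$. Concatenating subpaths of length at most $2d$ inside the $G[R_i]$ with the (at most one) edge crossing each conflict yields a walk in $G$ of length at most $(2d+1)(r+1)$ between the $\sigma$-smallest vertices of $R$ and of $R'$, all of whose vertices lie in some $R_i$ and hence are not $\sigma$-smaller than the $\sigma$-smallest vertex of $R'$; so that vertex is weakly $(2d+1)(r+1)$-reachable from the $\sigma$-smallest vertex of $R$ in $(G,\sigma)$. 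Since a vertex of $G$ lies in at most one member of $\PP$ and at most one of $\PP^\ast$, the map taking a member to its $\sigma$-smallest vertex is at most two-to-one, so $\wcol_r(H)\le 2\,\wcol_{(2d+1)(r+1)}(G)$, which is polynomial in $r$; thus $H$ has polynomial expansion.

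It remains to conclude from the local exchange property together with the sublinear separators and bounded density of $H$, which is done exactly as in~\cite{har2015approximationjrnl}: a balanced-separator recursion removes a sublinear-sized set of vertices so that each remaining component has fewer than $c$ members of $\PP$; the local exchange property then bounds the $\PP^\ast$-members of each component by its $\PP$-members, while a density-based bound on the removed vertices and their neighbourhoods controls the loss; summing over components gives $|\PP^\ast|\le(1+\eta)|\PP|$ with $\eta\to0$ as $c\to\infty$ at a rate controlled by $\beta$ and $\rho$. Choosing $c$ large enough in terms of $\GG$, $d$ and $\varepsilon$ contradicts $|\PP|<(1-\varepsilon)|\PP^\ast|$; the same $c$ works for the induced and non-induced variants. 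I expect the middle paragraph to be the main obstacle: once one decides to argue via $\wcol$ rather than via minors it is short, but that reformulation is the key point, and it is also what makes this argument simpler than the one for Theorem~\ref{thm-main}.
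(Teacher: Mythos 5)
Your overall strategy---compare the local optimum with a maximum packing through a conflict graph on $\PP\cup\PP^\ast$, show that this graph lies in a class with polynomial expansion, and finish with the separator-based exchange analysis of \cite{har2015approximationjrnl}---is the same as the paper's, and the first and last steps are sound (the paper phrases the last step via a low-excess cover, Lemma~\ref{lemma-break}, rather than an explicit separator recursion, but these are interchangeable). The problem is the middle step, which you yourself identify as the crux. You justify polynomial expansion of the exchange graph $H$ by asserting that polynomial expansion of $\GG$ is \emph{equivalent} to $\wcol_r(G)$ being polynomially bounded in $r$. Only one direction of that equivalence is available: polynomially bounded weak coloring numbers do imply polynomial expansion, but the converse---that every class with polynomial expansion has $\wcol_r$ bounded by a polynomial in $r$---is not a known theorem and cannot be cited; the bounds coming out of Zhu's theorem are far from polynomial, and polynomial bounds on $\wcol_r$ are known only for special classes such as proper minor-closed ones. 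Note that the paper is careful on exactly this point: it invokes Theorem~\ref{thm-zhu} only for the single fixed radius equal to the diameter of $\pi$, where a constant bound suffices, and never needs the dependence on $r$ to be polynomial. Your transfer inequality $\wcol_r(H)\le 2\,\wcol_{(2d+1)(r+1)}(G)$ is correct as far as it goes, but combined with Zhu's theorem it only yields that the exchange graphs form a class of \emph{bounded} expansion, which does not give strongly sublinear separators.

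The gap is repairable with precisely the tool you decided to avoid. Your worry---that members of $\PP$ and $\PP^\ast$ may overlap, so shallow minors of $H$ cannot be realized by vertex-disjoint branch sets in $G$---is exactly what the Har-Peled--Quanrud theorem on expansion of shallow packings is designed to handle: $\PP\cup\PP^\ast$ is a $(2,r)$-shallow collection (each vertex lies in at most one member of each packing, and each member induces a subgraph of radius at most $r$), so by Corollary~\ref{cor-preserve} its packing graph, of which your bipartite conflict graph is a subgraph, belongs to a fixed class $\GG'$ with polynomial expansion depending only on $\GG$ and $r$. With that substitution your argument closes and coincides with the paper's proof of Theorem~\ref{thm-mainpack}.
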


The time complexity of the algorithms from Theorems~\ref{thm-main} and \ref{thm-mainpack} in a straightforward
implementation is $O(n^{c+1})$ times the complexity of testing or extension, and thus the exponent of the time
complexity depends on the desired precision; i.e., we obtain a PTAS, not an FPTAS.  However, in all the examples
that we considered, the property $\pi$ is \emph{FO-definable} in the following sense: There exists a formula $\varphi$
in the first-order logic (allowing quantification over vertices, but not over sets of vertices or over edges)
using the adjacency predicate, with free variables $x_1$, \ldots, $x_m$ and $z$ such that $(G,Z)$ with $Z$ non-empty has the property
$\pi$ if and only if for some $v_1,\ldots,v_m\in G$,
$$Z=\{v\in V(G):G\models \varphi(v_1,\ldots,v_m,v)\}.$$
E.g., for the property $\pi$ used to define $r$-dominating and $2r$-independent set problems, we can set
$\varphi(x_1,z)\equiv\text{dist}_r(x_1,z)$,
where
$$\text{dist}_r(x,y)\equiv (\exists p_0,\ldots,p_r)\;x=p_0\land y=p_r\land \bigwedge_{i=1}^r (p_{i-1}=p_i\lor p_{i-1}p_i\in E).$$
In this case, we can express the testing and extension in first-order logic.  Let us demonstrate this on the most difficult
example, the step 2 of the $c$-local search for maximum $\pi$-packing.  Let $\PP$ be the current $\pi$-packing.  For each element $P$ of $\PP$,
color the edges of a BFS spanning tree of $G[P]$ red; we will access this information about edge colors using a new binary predicate $R$.
Let $\text{dist}^R_r(x,y)$ be the predicate defined analogously to $\text{dist}_r(x,y)$, but with $E$ replaced by $R$.
Let $T$ be the unary predicate interpreted as $\bigcup \PP$.
We can delete at most $c-1$ elements from $\PP$ and add $c$ other elements to obtain a $\pi$-packing exactly if the following
formula is satisfied:
\begin{align*}
(\exists& r_1,\ldots,r_{c-1},x_1^1,\ldots,x_m^1,\ldots,x_1^c,\ldots,x_m^c)\\
&\bigwedge_{j=1}^c (\exists z)\;\varphi(x_1^j,\ldots,x_m^j,z)\\
&\land \bigwedge_{j=1}^c (\forall z)\;\Bigl(\varphi(x_1^j,\ldots,x_m^j,z)\land z\in T\Rightarrow \bigvee_{i=1}^{c-1} r_i\in T\land \text{dist}_{2r}^R(z,r_i)\Bigr)\\
&\land \bigwedge_{1\le j_1 < j_2\le c} (\forall z_1,z_2)\;\bigl(\varphi(x_1^{j_1},\ldots,x_m^{j_1},z_1)\land \varphi(x_1^{j_2},\ldots,x_m^{j_2},z_2)\Rightarrow z_1\neq z_2\bigr).
\end{align*}
The elements to be removed from $\PP$ are those intersected by $r_1$, \ldots, $r_{c-1}$, while those added are given by the formula $\varphi$
with parameters $x_1^j$, \ldots, $x_m^j$ for $j\in\{1,\ldots,c\}$.
\begin{itemize}
\item The second line expresses that for each $j$, the formula $\varphi$ parameterized by the $m$-tuple $x_1^j$, \ldots, $x_m^j$ defines a non-empty set $Z_j$.
\item The third line states that $Z_j$ intersects $T$ only in the elements containing $r_1$, \ldots, $r_{c-1}$
that are to be removed from $\PP$.
\item The final line describes that the sets $Z_1$, \ldots, $Z_m$ are disjoint.
\end{itemize}

Dvořák, Král' and Thomas~\cite{dvorak2013testing} described a data structure that for each graph from a fixed class with bounded expansion
after a linear-time preprocessing allows one in constant time perform the following operations:
\begin{itemize}
\item Recolor an edge.
\item Change the membership of a vertex in a set interpreting an unary predicate.
\item Test whether a fixed first-order sentence $\psi$ is satisfied, and
\item if so, find a satisfying assignment for the initial segment of existential quantifiers of $\psi$.
\end{itemize}
With this data structure, we conclude that for a FO-definable property of bounded diameter,
$n$-vertex graph $G$ from any fixed class with bounded expansion, and any fixed $c$, we can run
\begin{itemize}
\item $c$-local search for minimum $\pi$-hitting in time $O(n)$, and
\item (induced) $c$-local search for maximum $\pi$-packing in time $O(n^2)$.
\end{itemize}
The reason for the quadratic time in the second case is due to the need to recolor the red edges after each iteration,
which may result in $O(n)$ time in case the sets of the (induced) $\pi$-packing have unbounded size.
In case they have bounded size, e.g., for the maximum (induced) $F$-matching problem, the time complexity becomes $O(n)$.
Moreover, even if the elements of the packing have unbounded size, it may be possible to use a different way how to
describe the packing in the first-order logic (e.g., in the case of maximum $t$-independent set, we can just mark the vertices
forming the set), again resulting in total time $O(n)$.

\section{Tools}

To prove Theorem~\ref{thm-main}, we use several standard tools from the theory of classes with bounded expansion.

\subsection*{Generalized coloring numbers}

Let $\prec$ be a linear ordering of vertices of a graph $G$.  For a non-negative integer $r$, a vertex $u$ is \emph{weakly $(r,\prec)$-reachable}
from a vertex $v\succeq u$ if there exists a path $P$ from $v$ to $u$ in $G$ of length at most $r$ such that all vertices in
$V(P)\setminus\{u\}$ are greater than $u$ in the ordering $\prec$.  Let $L_{r,\prec}(v)$ denote the set of all vertices that
are weakly $(r,\prec)$-reachable from $v$, and let $R_{r,\prec}(u)$ denote set of all vertices $v$ from which $u$ is weakly $(r,\prec)$-reachable.
The \emph{weak $r$-coloring number} of $(G,\prec)$ is the maximum of $|L_{r,\prec}(v)|$ over
all vertices $v\in V(G)$.  The \emph{weak $r$-coloring number} $\wcol_r(G)$ of $G$ is the minimum weak $r$-coloring number of $(G,\prec)$ over all
linear orderings $\prec$ of $V(G)$.

\begin{theorem}[Zhu~\cite{zhu2009colouring}]\label{thm-zhu}
For every class $\GG$ with bounded expansion and every $r\ge 0$, there exists an integer $b$ such that $\wcol_r(G)\le b$ for every $G\in\GG$.
\end{theorem}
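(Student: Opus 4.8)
The plan is to interpolate through the \emph{$r$-admissibility} of $G$. For a linear order $\prec$ and a vertex $v$, let $\mathrm{adm}_{r,\prec}(v)$ be the largest number of paths $P_1,\dots,P_k$, each of length between $1$ and $r$, that start at $v$, end at pairwise distinct vertices smaller than $v$, have all internal vertices larger than $v$, and are internally disjoint (pairwise sharing only $v$); set $\mathrm{adm}_r(G)=\min_\prec\max_v \mathrm{adm}_{r,\prec}(v)$. Two facts then reduce the theorem to a single use of the density hypothesis: weak reachability is controlled by admissibility in any fixed order, and admissibility is controlled by the sparsity of shallow minors. Concretely, I would prove (i) $\wcol_r(G)\le F(\mathrm{adm}_r(G),r)$ for a fixed function $F$, and (ii) $\mathrm{adm}_r(G)\le H(f(\lfloor r/2\rfloor))$ whenever $G$ has expansion bounded by $f$. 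Since $\GG$ has bounded expansion, $f(\lfloor r/2\rfloor)$ is a fixed constant, so (ii) bounds $\mathrm{adm}_r(G)$ and then (i) bounds $\wcol_r(G)$ by $b:=F(H(f(\lfloor r/2\rfloor)),r)$ uniformly over $G\in\GG$.

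For (i), fix the order attaining $\mathrm{adm}_r(G)=a$ and bound $|L_{r,\prec}(v)|$ in that order. I would organize the witnessing paths of weakly reachable vertices into a rooted search tree of depth at most $r$: starting from $v$, one explores paths toward ever-smaller not-yet-reached targets, and a local application of Menger's theorem shows that the number of internally disjoint ways to extend a partial path before it drops below the current target vertex is capped by the admissibility at the branch vertex, hence by $a$. This gives branching at most $a$ and depth at most $r$, so $|L_{r,\prec}(v)|\le 1+a+a^2+\dots+a^r$. This step is purely combinatorial and does not refer to the class $\GG$; I expect it to be routine.

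Step (ii) is where the expansion hypothesis enters, and it is the main obstacle. The argument is contrapositive: if, in an otherwise optimal (degeneracy-type) order, many vertices simultaneously admit a \emph{fan} of $k$ internally disjoint back-paths of length at most $r$, then contracting each witnessing path onto its lower half yields an $\lfloor r/2\rfloor$-shallow minor in which these fans become high-degree configurations, forcing average degree $\gtrsim k$ and hence density $\gtrsim k$. Making this honest is the delicate part: one must select endpoints so that the contracted subgraphs are genuinely vertex-disjoint and of radius at most $\lfloor r/2\rfloor$, and argue that a constant fraction of the fan-paths survive as \emph{distinct} edges of the minor rather than collapsing together. Granting this bookkeeping, expansion bounded by $f$ gives $k=O(f(\lfloor r/2\rfloor))$, i.e.\ $\mathrm{adm}_r(G)\le H(f(\lfloor r/2\rfloor))$.

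Finally I would assemble the pieces exactly as in the first paragraph: (ii) makes $\mathrm{adm}_r(G)$ a constant depending only on $\GG$ and $r$, and (i) then bounds $\wcol_r(G)$ by the constant $b$, uniformly over $G\in\GG$. Note that only finiteness of $f(\lfloor r/2\rfloor)$ is used, so bounded expansion (rather than polynomial expansion) suffices, as claimed. I expect the assembly to be immediate and step (i) to be standard, with essentially all of the difficulty concentrated in the shallow-minor construction of step (ii), where keeping the contracted pieces disjoint and the fan-edges distinct is the crux.
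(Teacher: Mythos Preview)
The paper does not prove this theorem; it is quoted as a known result of Zhu and used as a black box in the proof of Theorem~\ref{thm-main}. There is therefore no proof in the paper to compare your proposal against.

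That said, your plan is a legitimate and by now standard route to the result, essentially the argument developed after Zhu's paper (see e.g.\ Dvo\v{r}\'ak's work on admissibility and the exposition in the Ne\v{s}et\v{r}il--Ossona de Mendez monograph). Zhu's original 2009 argument does not isolate admissibility as an intermediate invariant; the detour through $\mathrm{adm}_r$ that you propose is now the preferred presentation precisely because step~(i) cleanly separates the order-theoretic combinatorics from the shallow-minor density argument.

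One caution on step~(ii): the phrase ``contracting each witnessing path onto its lower half'' is too local to work as written, since the paths in a single fan at $v$ all share $v$ and so cannot yield disjoint branch sets. The actual argument constructs the ordering greedily---repeatedly deleting a vertex of minimum back-connectivity in the remaining graph---and at the hypothetical obstruction, where \emph{every} remaining vertex has back-connectivity exceeding $k$, builds the dense shallow minor globally from this residual subgraph rather than from one fan. The depth of the minor one obtains in the standard write-up is $r{-}1$ (or $\lceil r/2\rceil$ in some variants) rather than $\lfloor r/2\rfloor$, but this is immaterial since only finiteness of $f$ at a single fixed depth is used. You have correctly identified this construction as where the real work lies; the rest of your outline is routine.
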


\subsection*{Expansion of shallow packings}

A collection $\CC$ of subsets of vertices of a graph $G$ is \emph{$(\omega, t)$-shallow} if every vertex of $G$ appears in
at most $\omega$ elements of $\CC$ and if $G[C]$ has radius at most $t$ for every $C\in \CC$.  The \emph{packing graph} $G[\CC]$
is defined as the graph with vertex set $\CC$ where $C_1$ and $C_2$ are adjacent if there exist $v_1\in C_1$ and $v_2\in C_2$
such that $v_1=v_2$ or $v_1v_2\in E(G)$.  In particular, an $r$-shallow minor is the packing graph of a $(1,r)$-shallow collection.
\begin{theorem}[Har-Peled and Quanrud~\cite{har2015approximationjrnl}]
Let $G$ be a graph with expansion bounded by a function $f$.  For every $(\omega, t)$-shallow collection $\CC$ of subsets of vertices of $G$,
the graph $G[C]$ has expansion bounded by the function
$$f'(r)=5\omega^2(2t+1)^2(2r + 1)^2f((2t+1)r+t).$$
\end{theorem}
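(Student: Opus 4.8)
The plan is to pull an $r$-shallow minor of $G[\CC]$ back into $G$, where $f$ controls density directly.

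Fix an $r$-shallow minor $H$ of $G[\CC]$: it arises from a subgraph of $G[\CC]$ by contracting pairwise disjoint branch sets $\{B_u:u\in V(H)\}$, each of radius at most $r$. Put $\widehat{B}_u:=\bigcup_{C\in B_u}C\subseteq V(G)$ and $\widehat{\mathcal B}:=(\widehat{B}_u)_{u\in V(H)}$. I would check three things. (a) $G[\widehat{B}_u]$ has radius at most $R:=(2t+1)r+t$: taking $C_0\in B_u$ that is a centre of $B_u$ in the packing graph and $w$ a centre of $G[C_0]$, a path of length at most $r$ in $G[\CC]$ from $C_0$ to the set containing a prescribed $v\in\widehat{B}_u$ turns into a $w$--$v$ walk inside $G[\widehat{B}_u]$ that leaves $C_0$ using at most $t$ edges, makes at most $r$ single-edge hops between consecutive sets, and traverses each of the at most $r$ further sets (each of diameter at most $2t$) using at most $2t$ edges, for total length at most $t+r(2t+1)=R$. (b) $\widehat{\mathcal B}$ is $(\omega,R)$-shallow: each vertex of $G$ lies in at most $\omega$ members of $\CC$, and each member of $\CC$ in at most one branch set; one also notes that a given subset of $V(G)$ can occur at most $\omega$ times among the $\widehat{B}_u$, a repeat costing only a bounded extra factor. (c) $H$ is a subgraph of the packing graph $G[\widehat{\mathcal B}]$ on the same vertex set, since an edge of $H$ comes from an edge $CC'$ of $G[\CC]$ with $C\in B_u$ and $C'\in B_{u'}$. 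As $(2t+1)(2r+1)=2R+1$, it now suffices to show that the packing graph of an arbitrary $(\omega,R)$-shallow collection $\mathcal D$ in $G$ has at most $5\omega^2(2R+1)^2f(R)\cdot|\mathcal D|$ edges.

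For this, the goal is to construct an $R$-shallow minor $N$ of $G$ with at most $|\mathcal D|$ vertices and at least $|E(G[\mathcal D])|/\bigl(5\omega^2(2R+1)^2\bigr)$ edges, since then $|E(G[\mathcal D])|\le 5\omega^2(2R+1)^2\,|E(N)|\le 5\omega^2(2R+1)^2f(R)\cdot|\mathcal D|$. Fix a centre $c_D$ of $G[D]$ for each $D\in\mathcal D$. Every edge $DD'$ of the packing graph is witnessed by a $c_D$--$c_{D'}$ path in $G$ of length at most $2R+1$ lying in $D\cup D'$ (run from $c_D$ inside $D$ to a vertex witnessing the intersection or adjacency, cross over, and continue inside $D'$ to $c_{D'}$). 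If the centres were pairwise distinct and these witnessing paths pairwise internally disjoint and disjoint from all other centres, then assigning to each $D$ the branch set formed by $c_D$ together with the near half of every incident witnessing path --- the two halves partitioning each path --- would exhibit the whole packing graph as an $R$-shallow minor of $G$, of density at most $f(R)$. The three obstructions to this are what the constant absorbs: collapsing all sets sharing a common centre loses a factor $\omega^2$, and introduces at most $\binom{\omega}{2}$ spurious edges per centre, hence at most $\binom{\omega}{2}|\mathcal D|$ in all, which is absorbed since $(2R+1)^2f(R)\ge\tfrac12$ whenever $G$ has an edge; each vertex of $G$ lies in at most $\omega$ of the sets, bounding how many witnessing paths run through it and absorbing a further $\omega$; and the remaining overlaps among the path fragments are handled by partitioning the paths into $O\bigl((2R+1)^2\bigr)$ classes inside which the fragments can be made disjoint, absorbing the $(2R+1)^2$.

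The crux --- and the step I expect to be the main obstacle --- is precisely this last reduction: turning the overlapping system of short witnessing paths into a single ply-one shallow minor while giving up only a $\mathrm{poly}(\omega,R)$ fraction of the edges. One cannot simply colour the sets with few colours so that each class is pairwise disjoint: many distinct sets through a single vertex form a large clique in their intersection graph, and the bound survives only because that vertex then has large ply. Carrying out this bookkeeping --- charging a heavily used vertex against the ply while embedding the rest of the path system, which is genuinely spread out over $G$, into an $R$-shallow minor after a bounded layering --- is the substance of the argument of Har-Peled and Quanrud; combined with the reduction of the first paragraph it gives the claimed expansion bound $f'$.
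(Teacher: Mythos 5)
This theorem is quoted by the paper from Har-Peled and Quanrud without proof, so there is no in-paper argument to compare yours against; I can only judge the attempt on its own terms. Your first paragraph is a correct and clean reduction: pulling the branch sets of an $r$-shallow minor of $G[\CC]$ back into $G$ does yield an $(\omega,R)$-shallow family with $R=(2t+1)r+t$ (your radius computation $t+r(2t+1)$ and the ply argument are both right), the minor is a subgraph of the packing graph of that family, and the identity $(2t+1)(2r+1)=2R+1$ shows the whole theorem would follow from the single depth-zero estimate $|E(G[\mathcal{D}])|\le 5\omega^2(2R+1)^2f(R)\,|\mathcal{D}|$ for every $(\omega,R)$-shallow collection $\mathcal{D}$.

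The problem is that this depth-zero estimate is the entire content of the theorem, and you do not prove it. Worse, the one quantitative claim you offer toward it is wrong: you assert that ``each vertex of $G$ lies in at most $\omega$ of the sets, bounding how many witnessing paths run through it.'' The ply bounds how many sets $D\in\mathcal{D}$ contain a vertex $v$, but a single such $D$ may be adjacent in the packing graph to arbitrarily many other sets, and $v$ may lie on the witnessing path of every one of those edges; the number of paths through $v$ is therefore controlled only by $\omega$ times the maximum degree of the packing graph, which is precisely the quantity you are trying to bound. Likewise, the claim that the path fragments can be split into $O\bigl((2R+1)^2\bigr)$ classes that can ``be made disjoint'' is unjustified, and --- as you yourself observe --- bounded ply does not bound the chromatic number of the intersection graph of these paths, so no simple colouring rescues it. Your closing admission that this step ``is the substance of the argument of Har-Peled and Quanrud'' is an accurate self-assessment: what you have is a correct reduction of the general expansion bound to the density bound for packing graphs of shallow collections, followed by an appeal to essentially the result being proved. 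To close the gap you need an actual argument for the depth-zero case, for instance extracting a subfamily whose witnessing structure genuinely realizes an $R$-shallow minor of $G$ while charging the congestion at heavily used vertices against the ply, or a direct count using an ordering of small weak colouring number.
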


\begin{corollary}\label{cor-preserve}
For every class $\GG$ with polynomial expansion and all non-negative integers $\omega$ and $t$, there exists
a class $\GG'$ with polynomial expansion such that the packing graph of any $(\omega, t)$-shallow collection of subsets of vertices
of a graph $G\in \GG$ belongs to $\GG'$.
\end{corollary}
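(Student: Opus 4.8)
The plan is to read the statement off directly from the definition of polynomial expansion together with the Har-Peled--Quanrud bound on the expansion of packing graphs stated just above. First I would use the hypothesis that $\GG$ has polynomial expansion to fix a polynomial $f:\mathbb{Z}_0^+\to\mathbb{R}_0^+$ such that every graph in $\GG$ has expansion bounded by $f$; it is harmless to assume in addition that $f$ is non-decreasing, since one may always replace $f(r)$ by $\max_{0\le s\le r}f(s)$, which is still bounded by a polynomial in $r$. Here $\omega$ and $t$ are fixed non-negative integers, so every quantity built from $\omega$, $t$, and $f$ that does not involve $r$ is a constant.

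Next I would apply the preceding theorem of Har-Peled and Quanrud: for every graph $G\in\GG$ and every $(\omega,t)$-shallow collection $\CC$ of subsets of $V(G)$, the packing graph $G[\CC]$ has expansion bounded by
$$f'(r)=5\omega^2(2t+1)^2(2r+1)^2 f\bigl((2t+1)r+t\bigr).$$
The key (and essentially only) observation is that $f'$ is itself a polynomial in $r$: the map $r\mapsto(2t+1)r+t$ is linear, so $f((2t+1)r+t)$ is a polynomial in $r$ because $f$ is; the factor $(2r+1)^2$ is a quadratic polynomial in $r$; and $5\omega^2(2t+1)^2$ is a constant. A product of polynomials being a polynomial, $f'$ is a polynomial whose degree is that of $f$ plus $2$.

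Finally I would simply take $\GG'$ to be the class of all graphs whose expansion is bounded by $f'$. By construction $\GG'$ has polynomial expansion, and by the displayed bound every packing graph $G[\CC]$ of an $(\omega,t)$-shallow collection of subsets of vertices of a graph $G\in\GG$ lies in $\GG'$, which is exactly what the corollary asserts. I do not expect any genuine obstacle here; the only points deserving a word of care are the verification that $f'$ really is a polynomial (rather than, say, something whose growth is only polynomially bounded on a subsequence) and the mild normalization of $f$ to be non-decreasing, both of which are immediate.
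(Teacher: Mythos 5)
Your proposal is correct and matches the paper's (implicit) argument exactly: the paper states the corollary without proof precisely because, as you observe, substituting a polynomial $f$ into the Har-Peled--Quanrud bound with $\omega$ and $t$ fixed yields another polynomial $f'$, and one takes $\GG'$ to be the class of graphs with expansion bounded by $f'$. Nothing further is needed.
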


\subsection*{Polynomial expansion and strongly sublinear separators}

A \emph{balanced separator} in a graph $G$ is a set $S\subseteq V(G)$ such that each component of $G-S$ has at most $\tfrac{2}{3}|V(G)|$
vertices.  For a function $s:\mathbb{Z}^+\to\mathbb{R}_0^+$, a class of graphs $\GG$ has \emph{$s$-separators} if for every $G\in \GG$,
every subgraph $H$ of $G$ has a balanced separator of size at most $s(|V(H)|)$.  A class $\GG$ has \emph{strongly sublinear separators}
if there exists $\varepsilon>0$ and a function $s(n)=O(n^{1-\varepsilon})$ such that $\GG$ has $s$-separators.

\begin{theorem}[Dvořák and Norin~\cite{dvorak2016strongly}]
A class of graphs has strongly sublinear separators if and only if it has polynomial expansion.
\end{theorem}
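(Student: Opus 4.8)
The plan is to prove the two implications separately. Throughout I may assume that $\GG$ is closed under taking subgraphs, since both properties are monotone in this sense: an $r$-shallow minor of a subgraph of a graph $G$ is an $r$-shallow minor of $G$, and a balanced separator of a subgraph is sought only among its own vertices. Thus it suffices to show: (i) if $\GG$ has polynomial expansion then it has strongly sublinear separators; and (ii) if $\GG$ has strongly sublinear separators then every $r$-shallow minor of a graph in $\GG$ has density bounded by a polynomial in $r$.

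For direction (i), the key observation is that a polynomial expansion bound controls shallow \emph{clique} minors: since $K_h$ has density $(h-1)/2$, if the density of every $r$-shallow minor is at most $f(r)$, then $K_h$ is not an $r$-shallow minor once $h>2f(r)+1$. I would then invoke the shallow-minor separator theorem of Plotkin, Rao and Smith: a graph on $n$ vertices with no $K_h$ as a minor of depth $\ell$ admits a balanced separator of size $O(n/\ell)+\mathrm{poly}(\ell,h)$. Substituting $h=2f(\ell)+2=\mathrm{poly}(\ell)$ and optimizing the depth parameter $\ell$ (balancing the term decreasing in $\ell$ against the term polynomial in $\ell$, which places $\ell$ at a fixed fractional power of $n$) makes the separator size $O(n^{1-\varepsilon})$ for some $\varepsilon>0$ determined by the degree of $f$. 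Applying this to every subgraph yields strongly sublinear separators.

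For direction (ii), I would argue the contrapositive: a shallow minor of large density must arise from a subgraph of $G$ with no small balanced separator. Given an $r$-shallow minor $H$ with branch sets $\{B_v\}$ of radius at most $r$, first prune each $B_v$ to a Steiner tree joining its center to the (at most $\deg_H(v)$) attachment points of its incident edges, so that $|B_v|\le r\deg_H(v)+1$; the union of these trees with the connecting edges is a subgraph $G_0\subseteq G$ with $|V(G_0)|\le 2r\,|E(H)|+|V(H)|$, and $H$ is obtained from $G_0$ by contracting the $B_v$. Weighting each vertex of $B_v$ by $1/|B_v|$ and applying a weighted version of the separator bound to $G_0$, the (pairwise distinct) branch sets meeting the separator form a balanced separator of $H$; hence bounds on the separators of subgraphs of $G$ transfer to bounds on the separators of $H$ and its subgraphs. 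A graph of high density contains, via extraction of a dense minimum-degree subgraph together with a shallow clique-minor or expander argument, a subgraph admitting no sublinear balanced separator, so the density of $H$ cannot exceed a function of $r$ fixed by the separator exponent.

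The main obstacle is the circular dependence that surfaces in direction (ii): the separator of $H$ produced above has size $O\bigl((2r\,|E(H)|+|V(H)|)^{1-\varepsilon}\bigr)$, in which the edge count $|E(H)|$ — the very quantity one is bounding — appears, because the Steiner pruning inflates the vertex count of $G_0$ by a factor proportional to the density of $H$. Resolving this naively succeeds only when $\varepsilon>1/2$. To treat an arbitrary exponent $\varepsilon>0$ I would set up an amplification (bootstrapping) argument: a crude density bound feeds, through the Plotkin–Rao–Smith estimate of direction (i), into an improved separator exponent, which in turn improves the density bound, and one iterates to a fixpoint. The delicate point is to verify that this iteration terminates with a bound that is \emph{polynomial} in $r$, rather than merely finite, which requires tracking the degree of the polynomial through each round of the bootstrap and is where the bulk of the technical work lies.
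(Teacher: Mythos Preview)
The paper does not prove this theorem; it is quoted from Dvo\v{r}\'ak and Norin and invoked as a black box in the Tools section. There is therefore no proof in the present paper to compare your proposal against.

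As an outline of the cited result itself, your direction (i) is the standard argument and is correct: Plotkin--Rao--Smith with the depth parameter set to a suitable fractional power of $n$ yields strongly sublinear separators from a polynomial bound on shallow clique minors. For direction (ii) you have correctly located the obstruction---the pruned host $G_0$ has order proportional to $|E(H)|$, so the separator bound transferred to $H$ is useful only once $|E(H)|$ is already under control---and you are right that some form of amplification is needed. Your sketch, however, stops short of supplying the mechanism that forces the iteration to terminate at a bound \emph{polynomial} in $r$ rather than some faster-growing function; you flag this yourself in the last sentence. So as a plan your proposal is sound for (i) and correctly diagnoses the genuine difficulty in (ii), but it does not yet contain the idea that closes that gap.
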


Let $\KK$ be a system of subsets of vertices of a graph $G$.  We say that $\KK$ is a \emph{cover} of $G$ if $\bigcup_{K\in\KK} G[K]=G$.
For $K\in \KK$, let $\partial K$ be the set of vertices of $K$ that belong to more than one element of $\KK$.  The \emph{excess}
of a cover $\KK$ is $\sum_{K\in \KK} |\partial K|$.  The following lemma is folklore (and the proof can be found e.g. in~\cite{har2016notes}).

\begin{lemma}\label{lemma-break}
Let $\GG$ be a class of graphs with strongly sublinear separators.  There exists a polynomial $p$ such that for
every $\varepsilon>0$, every graph $G\in \GG$ has a cover with excess at most $\varepsilon|V(G)|$ consisting of sets of size
at most $p(1/\varepsilon)$.
\end{lemma}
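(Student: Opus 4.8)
The plan is to build the cover by recursively splitting $G$ along balanced separators, stopping once a piece becomes small, and then bounding how often a vertex gets duplicated. Fix a function $s(n)=O(n^{1-\delta})$, $\delta>0$, witnessing that $\GG$ has strongly sublinear separators, and fix a threshold $k$ to be chosen at the end; it will turn out that $k$ can be taken to be a fixed polynomial in $1/\varepsilon$ (with exponent $1/\delta$ and coefficients depending only on $\GG$), so that $p(1/\varepsilon):=k$ works. Define a recursive procedure on subgraphs $H$ of $G$: if $|V(H)|\le k$, output the single part $V(H)$; otherwise choose a balanced separator $S$ of $H$ with $|S|\le s(|V(H)|)$, let $C_1,\dots,C_m$ be the components of $H-S$ (each of size at most $\tfrac23|V(H)|$), greedily group $\{C_1,\dots,C_m\}$ into a bounded number $M$ of groups of total size at most $\tfrac23|V(H)|$, and recurse on $H[V(\mathcal{G})\cup S]$ for each group $\mathcal{G}$. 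Once $k$ is large enough that $s(n)\le n/4$ for $n\ge k$, each recursive call is on a strictly smaller subgraph, so the recursion terminates, every output part has at most $k$ vertices, and a routine induction shows that the resulting family $\KK$ is a cover of $G$, since an edge of $G$ is never split between distinct components of $H-S$.

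For the excess, note that a vertex of $G$ ends up in more than one part only if at some recursion node it is placed into the chosen separator, in which case it is copied into all of the (at most $M$) children of that node. Tracing the path of a vertex $v$ down the recursion tree, its multiplicity $\mathrm{mult}(v)$, the number of parts containing $v$, satisfies $\mathrm{mult}(v)-1\le (M-1)\,|\{t: v\in S_t\}|$, where $S_t$ denotes the separator chosen at node $t$. Since every vertex lies in at least one part, $\mathrm{excess}(\KK)=\sum_{v:\mathrm{mult}(v)\ge2}\mathrm{mult}(v)\le 2\sum_v(\mathrm{mult}(v)-1)\le 2(M-1)\sum_t|S_t|\le 2(M-1)\sum_t s(|V(H_t)|)$, where the last two sums range over the internal nodes $t$ of the recursion tree.

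It remains to bound $\sum_t s(|V(H_t)|)$. Assign node $t$ to level $j$ when $(\tfrac{11}{12})^{j+1}|V(G)|<|V(H_t)|\le(\tfrac{11}{12})^{j}|V(G)|$; then every child of a level-$j$ node lies at level at least $j+1$. An induction on $j$, using that at each node the separator is copied into at most $M$ children at a cost of at most $s(n)\le n/4$ extra vertices per copy and that $k$ is large, shows that the sizes of the nodes at any single level sum to at most $2|V(G)|$; hence there are $O((\tfrac{12}{11})^{j})$ nodes at level $j$, and $\sum_{t\text{ at level }j}s(|V(H_t)|)=O\!\big((\tfrac{12}{11})^{j}((\tfrac{11}{12})^{j}|V(G)|)^{1-\delta}\big)=O\!\big(|V(G)|^{1-\delta}(\tfrac{12}{11})^{\delta j}\big)$. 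Summing this geometric-type series over the $O(\log(|V(G)|/k))$ relevant levels gives $\sum_t s(|V(H_t)|)=O\!\big(|V(G)|^{1-\delta}(|V(G)|/k)^{\delta}\big)=O(|V(G)|/k^{\delta})$. Therefore $\mathrm{excess}(\KK)=O(|V(G)|/k^{\delta})$, and taking $k$ to be the appropriate polynomial in $1/\varepsilon$ makes this at most $\varepsilon|V(G)|$.

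The step I expect to be the main obstacle is this excess bound, specifically the fact that separator vertices are duplicated into several children, which a priori could make the total size of all pieces grow super-linearly and destroy the level-by-level geometric summation. The point that makes it work is precisely that the separators are strongly sublinear ($\delta>0$) and that $k$ is large: then the ratio $s(n)/n\le c/k^{\delta}$ of separator size to piece size is tiny, so each level inflates the total piece size by only a $(1+o(1))$ factor and the per-level contributions to $\sum_t s(|V(H_t)|)$ still decay geometrically; the same $\delta>0$ is what converts that sum into the savings $|V(G)|/k^{\delta}$ needed to beat $\varepsilon|V(G)|$.
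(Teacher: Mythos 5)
The paper does not actually prove this lemma --- it is stated as folklore with a pointer to Har-Peled's notes~\cite{har2016notes} --- and your argument is precisely that standard folklore proof: recurse on balanced separators down to pieces of size $k=\mathrm{poly}(1/\varepsilon)$, observe that only separator vertices get duplicated, and control the total separator mass by a level-by-level geometric summation that exploits $\delta>0$. The construction, the cover property, and the excess accounting via $\mathrm{mult}(v)-1\le (M-1)|\{t:v\in S_t\}|$ are all correct. The one place where your writeup is thinner than it should be is the claim that the sizes at a single level sum to at most $2|V(G)|$: the justification you cite there ($s(n)\le n/4$ per copy) only gives a blowup of $(1+(M-1)/4)^j$, which is useless; what actually closes the induction is feeding the geometric bound $\sum_{j'<j}\Sigma_{j'}=O(|V(G)|/k^{\delta})$ back into the estimate $W_j\le |V(G)|+(M-1)\sum_{j'<j}\Sigma_{j'}$, i.e., the per-level size bound and the per-level separator bound must be proved by a simultaneous induction. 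You do identify this as the crux and state the correct mechanism in your closing paragraph, so I count this as a presentational rough edge rather than a gap.
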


\section{The proof}

The proof of Theorem~\ref{thm-main} follows the idea of~\cite{har2015approximationjrnl}, with the following crucial difference: Instead of
comparing the actual solution with the optimal one, we compare it with a suitably chosen enlargement of the optimal solution.
For a set $O$ of vertices of a graph $G$, a linear ordering $\prec$ of $V(G)$, and a non-negative integers $r$ and $m$,
we say a vertex $v$ is \emph{$(r,\prec,m,O)$-rich} if $|R_{r,\prec}(v)\cap O|\ge m$.

\begin{observation}\label{obs-fewrich}
Let $G$ be a graph, $O$ a set of vertices of $G$, $r$ and $m$ non-negative integers, and $\prec$
a linear ordering of vertices of $G$.  Let $O'$ be the set of all $(r,\prec,m,O)$-rich vertices.
If the weak $r$-coloring number of $(G,\prec)$ is at most $b$, then
$$|O'|\le \frac{b}{m}|O|.$$
\end{observation}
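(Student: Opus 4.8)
The plan is to prove the inequality by a double counting of incidences through the weak reachability relation, using the only quantitative input available, namely $|L_{r,\prec}(v)|\le b$ for every vertex $v$. The first thing I would record is the symmetry built into the definitions: for vertices $v$ and $w$ we have $w\in R_{r,\prec}(v)$ if and only if $v\in L_{r,\prec}(w)$, since both statements say precisely that there is a path of length at most $r$ from $w$ to $v$ all of whose vertices other than $v$ (including $w$) are $\succ v$. Hence for any vertex $v$,
$$R_{r,\prec}(v)\cap O=\{w\in O: v\in L_{r,\prec}(w)\},$$
so $v$ is $(r,\prec,m,O)$-rich exactly when $v$ lies in $L_{r,\prec}(w)$ for at least $m$ vertices $w\in O$.

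Next I would count the set $I$ of pairs $(w,v)$ with $w\in O$ and $v\in L_{r,\prec}(w)$ in two ways. Summing over the first coordinate and using that the weak $r$-coloring number of $(G,\prec)$ is at most $b$ gives $|I|=\sum_{w\in O}|L_{r,\prec}(w)|\le b|O|$. Summing over the second coordinate and discarding all terms with $v\notin O'$ gives, by the reformulation above, $|I|\ge\sum_{v\in O'}|\{w\in O: v\in L_{r,\prec}(w)\}|\ge m|O'|$. Combining the two bounds yields $m|O'|\le b|O|$, which is the claimed inequality $|O'|\le\tfrac{b}{m}|O|$.

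There is essentially no obstacle here; the only point requiring care is the direction of the reachability relation, i.e., making sure that in the double count it is the sets $L_{r,\prec}(w)$ for $w$ ranging over $O$ that appear (these are the sets bounded by $\wcol_r$), rather than the sets $R_{r,\prec}$, which are not controlled. I would also dispose of the degenerate case $m=0$ separately, where the statement is read with the convention $b/0=\infty$ (equivalently, one simply assumes $m\ge1$, which is all that is needed in the application).
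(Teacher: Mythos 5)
Your proof is correct and is essentially the paper's own argument: both count the incidences between $O$ and $O'$ through the equivalence $v\in R_{r,\prec}(w)\Leftrightarrow w\in L_{r,\prec}(v)$, bounding the count below by $m|O'|$ via richness and above by $b|O|$ via the weak coloring number. Your extra remark about the degenerate case $m=0$ is a reasonable precaution the paper omits, but nothing of substance differs.
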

\begin{proof}
By the definition of richness, there are at least $m|O'|$ pairs of vertices $(v,u)$ such that $v\in O'$
and $u\in R_{r,\prec}(v)\cap O$.  However, if $u\in R_{r,\prec}(v)$, then $v\in L_{r,\prec}(u)$,
and $|L_{r,\prec}(u)|\le b$ by the assumptions.  Hence, the number of such pairs is also at most $b|O|$.
\end{proof}

Let us consider the setting of Observation~\ref{obs-fewrich}, and let $A$ be another set of vertices of $G$.
Let $\CC_{r,\prec,m,O,A}=\{C_v:v\in A\cup O\cup O'\}$ be a system of subsets of vertices of $G$ defined
as follows:
\begin{itemize}
\item For $v\in O'$, let $C_v=R_{r,\prec}(v)$.
\item For $v\in O\setminus O'$, let
$$C_v=\bigcup_{x\in L_{r,\prec}(v)\setminus O'} R_{r,\prec}(x).$$
\item For $v\in A\setminus (O\cup O')$, let $C_v=\{v\}$.
\end{itemize}

\begin{lemma}\label{lemma-main}
Let $G$ be a graph, $A$ and $O$ sets of vertices of $G$, $r$ and $m$ non-negative integers, and $\prec$
a linear ordering of vertices of $G$.  Let $O'$ be the set of all $(r,\prec,m,O)$-rich vertices.
Let $H$ be the packing graph of $\CC=\CC_{r,\prec,m,O,A}$.
If $F$ is a subgraph of $G$ of diameter at most $r$ and $V(F)\cap O\neq\emptyset$, then
there exists $u\in V(F)\cap(O\cup O')$ such that $C_uC_v\in E(H)$ for every $v\in V(F)\cap A\setminus (O\cup O')$.
Moreover, if the weak $r$-coloring number of $(G,\prec)$ is at most $b$, then
$\CC$ is $(bm+1,2r)$-shallow.
\end{lemma}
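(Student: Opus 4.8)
The plan is to prove the two assertions separately. For the first assertion, suppose $F$ is a subgraph of $G$ with diameter at most $r$ and $V(F)\cap O\neq\emptyset$. The natural candidate for $u$ is obtained by looking at the $\prec$-smallest vertex reachable within $F$ from the part of $F$ lying in $O$. Concretely, I would pick a vertex $w\in V(F)\cap O$ and let $u$ be the $\prec$-minimum element of $L_{r,\prec}(w)$; since $w\in O$ and every vertex of $F$ is within distance $r$ of $w$, this $u$ is weakly $(r,\prec)$-reachable from $w$ via a path inside $F$, so $u\in L_{r,\prec}(w)\subseteq$ (depending on whether $w\in O'$) the relevant defining set. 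I then need to argue $u\in O\cup O'$: if $w\in O'$ then $C_w=R_{r,\prec}(w)\ni w$, and one checks $u\in V(F)\cap O'$ or handles it directly; if $w\notin O'$, then $u\in L_{r,\prec}(w)\setminus O'$ (after possibly re-choosing $u$ to avoid $O'$, or noting that if some vertex of $L_{r,\prec}(w)$ is in $O'$ we may take that one instead), so $w$'s cluster $C_w$ absorbs $R_{r,\prec}(u)\ni w$. Either way, the key point is that $u\in V(F)$ and $u$'s own cluster $C_u$ (or the cluster of the $O$-vertex that swallowed $u$) contains enough of the weak-reachability structure.

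For the adjacency claim, fix $v\in V(F)\cap A\setminus(O\cup O')$, so $C_v=\{v\}$. I must show $C_u$ and $C_v$ are adjacent in $H$, i.e., some vertex of $C_u$ equals $v$ or is adjacent to $v$ in $G$. Since $v,u\in V(F)$ and $F$ has diameter at most $r$, there is a path of length at most $r$ between them in $F$. I would walk along this path to its $\prec$-minimum vertex $x$; then $x$ is weakly $(r,\prec)$-reachable from both $u$ and $v$, so $x\in L_{r,\prec}(v)$ and $x\in L_{r,\prec}(u)$, equivalently $u,v\in R_{r,\prec}(x)$. Now if $u\in O'$ then $C_u=R_{r,\prec}(u)$ — but I actually want $v$ related to $C_u$, so instead I should observe: the vertex of $C_u$ witnessing adjacency should be $x$ itself, which forces $x\in C_u$. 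This holds because $C_u$ (when $u\in O'$, it is $R_{r,\prec}(u)$; when $u\in O\setminus O'$ it is a union $\bigcup_{x'\in L_{r,\prec}(u)\setminus O'} R_{r,\prec}(x')$) contains $R_{r,\prec}(x)$ provided $x\notin O'$, or contains $x$ directly. The one subtlety is the case $x\in O'$: then $x$ has its own cluster $C_x=R_{r,\prec}(x)\ni v$, and one argues $C_u$ is adjacent to $C_x$ which is adjacent to… — no, I want a single vertex, so I would instead argue that whenever $x\in O'$ we can replace $u$ by $x$ from the outset (since $x\in V(F)\cap O'$ works as the distinguished vertex). The cleanest route: define $u$ to be the $\prec$-minimum vertex of $V(F)$ that is weakly $(r,\prec)$-reachable (within $F$) from some vertex of $V(F)\cap O$; then for every $v\in V(F)$, the $\prec$-minimum vertex on an $F$-path from $u$ to $v$ equals $u$ by minimality, giving $v\in R_{r,\prec}(u)$ directly, and $R_{r,\prec}(u)\subseteq C_u$ holds in all three cluster cases. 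Checking $u\in O\cup O'$ then uses that $u$ itself lies in $L_{r,\prec}(w)$ for $w\in V(F)\cap O$; I expect this verification — making sure the membership bookkeeping between $O$, $O'$, and the three cluster types is consistent — to be the main obstacle.

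For the shallowness claim, the radius bound is immediate: each $C_v$ is either a singleton, or $R_{r,\prec}(v)$, or a union of sets $R_{r,\prec}(x)$ over $x\in L_{r,\prec}(v)$; since $R_{r,\prec}(v)\subseteq$ (vertices within distance $r$ of $v$) and $L_{r,\prec}(v)$ is also within distance $r$ of $v$, every $C_v$ is contained in the ball of radius $2r$ around $v$ in $G$, and $G[C_v]$ is connected through $v$ (or through the relevant reachability paths), so $G[C_v]$ has radius at most $2r$. For the multiplicity bound, fix a vertex $y\in V(G)$ and count the clusters $C_v$ containing $y$. A cluster of type $C_v=R_{r,\prec}(v)$ with $v\in O'$ contains $y$ only if $v\in L_{r,\prec}(y)$, and there are at most $b$ such $v$. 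A cluster $C_v$ with $v\in O\setminus O'$ contains $y$ only if $y\in R_{r,\prec}(x)$ for some $x\in L_{r,\prec}(v)\setminus O'$, i.e., $x\in L_{r,\prec}(y)$ and $v\in R_{r,\prec}(x)\cap O$; for each of the at most $b$ choices of $x\in L_{r,\prec}(y)$, since $x\notin O'$ we have $|R_{r,\prec}(x)\cap O|<m$, giving at most $b(m-1)$ such pairs, hence at most $b(m-1)$ such clusters. Finally $y$ lies in at most one singleton cluster $C_y=\{y\}$. Summing, $y$ is in at most $b + b(m-1) + 1 = bm+1$ elements of $\CC$, as claimed.
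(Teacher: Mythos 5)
Your treatment of the shallowness claim is correct and matches the paper's: the multiplicity count (at most $b$ clusters $C_v$ with $v\in O'$ containing a fixed vertex $y$, at most $b(m-1)$ with $v\in O\setminus O'$ via the pairs $(x,v)$ with $x\in L_{r,\prec}(y)\setminus O'$ and $v\in R_{r,\prec}(x)\cap O$, plus one singleton) is exactly the paper's argument, and the radius bound goes through once one notes -- as you do only in passing -- that the paths witnessing weak reachability to a vertex $x$ lie entirely inside $R_{r,\prec}(x)$, so the distance-$2r$ bound holds in the induced subgraph $G[C_v]$ and not merely in $G$, and that each set $R_{r,\prec}(x)$ in the union defining $C_v$ contains $v$ itself.

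The first assertion, however, has a genuine gap. The route you finally settle on -- take $u$ to be the $\prec$-minimum vertex of $V(F)$ that is weakly reachable within $F$ from a vertex of $V(F)\cap O$, which (since $F$ has diameter at most $r$) is just the $\prec$-minimum vertex of $V(F)$ -- does give $V(F)\subseteq R_{r,\prec}(u)$, but this $u$ need not belong to $O\cup O'$ at all: it can lie in $A\setminus(O\cup O')$, in which case $C_u=\{u\}$ and $R_{r,\prec}(u)\subseteq C_u$ fails, or in none of the sets, in which case $C_u$ is not even defined. You flag exactly this verification as ``the main obstacle'' and leave it open; it is not bookkeeping but the crux of the lemma. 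The resolution, which your first paragraph circles around without committing to, is a case split on the $\prec$-minimum vertex $x$ of $V(F)$ (note $x\in L_{r,\prec}(y)$ and $y\in R_{r,\prec}(x)$ for every $y\in V(F)$, via paths inside $F$): if $x\in O'$, take $u=x\in V(F)\cap O'$, so that $V(F)\subseteq R_{r,\prec}(x)=C_u$; if $x\notin O'$, take $u$ to be an arbitrary vertex of $V(F)\cap O$ (nonempty by hypothesis), observe that $x\in L_{r,\prec}(u)\setminus O'$, and conclude $V(F)\subseteq R_{r,\prec}(x)\subseteq C_u$ by the second clause of the definition of $\CC_{r,\prec,m,O,A}$ -- that clause exists precisely to absorb this case. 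Either way every $v\in V(F)\cap A\setminus(O\cup O')$ satisfies $\{v\}=C_v\subseteq C_u$ with $v\neq u$, giving the required edges of $H$.
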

\begin{proof}
Let $x$ be the smallest vertex of $V(F)$ in the ordering $\prec$.  
Since $F$ has diameter at most $r$, for every $y\in V(F)$, there exists a path from $y$ to $x$ of length at most
$r$ in $F$, and by the choice of $x$, the vertices of this path appear after $x$ in the ordering $\prec$.
Consequently $y\in R_{r,\prec}(x)$ and $x\in L_{r,\prec}(y)$.

If $x\in O'$, let $u=x$, otherwise choose $u$ as an arbitrary vertex of $V(F)\cap O$.
In the former case, we have $V(F)\subseteq R_{r,\prec}(u)=C_u$.
In the latter case, we have $x\in L_{r,\prec}(u)\setminus O'$, and
thus $V(F)\subseteq R_{r,\prec}(x)\subseteq C_u$.  Hence, $C_uC_v\in E(H)$ for every $v\in V(F)\cap A\setminus (O\cup O')$.

Next, let us bound the diameter of the sets in $\CC$.
For any $z\in V(G)$ and for every vertex $w$ of $R_{r,\prec}(z)$, there is a path of length at most $r$ from $w$ to $z$
with all vertices appearing after $z$ in $\prec$; observe that the vertices of this path also belong to $R_{r,\prec}(z)$.
Therefore, every vertex of $R_{r,\prec}(z)$ is at distance at most $r$ from $z$ in $G[R_{r,\prec}(z)]$,
and thus $G[R_{r,\prec}(z)]$ has diameter at most $2r$.  For $v\in O\cup O'$, we conclude that $C_v$
is the union of vertex sets of graphs of diameter at most $2r$, each of them containing $v$, and thus $G[C_v]$ has radius at most $2r$.
For $v\in A\setminus (O\cup O')$, $G[C_v]$ has radius $0$.

Finally, let us argue that no vertex belongs to too many of the sets in $\CC$. If a vertex $z$ belongs to $C_v$ for $v\in O'$,
then $z\in R_{r,\prec}(v)$, and thus $v$ one of at most $b$ vertices in $L_{r,\prec}(z)$.  If $z$ belongs to $C_v$ for
$v\in O\setminus O'$, then $z\in R_{r,\prec}(x)$ for some $x\in L_{r,\prec}(v)\setminus O'$; there are at most
$b$ choices for $x$ in $L_{r,\prec}(z)$, and for fixed $x\not\in O'$, there are less than $m$ vertices $v\in O$
such that $x\in L_{r,\prec}(v)$ (or equivalently, $v\in R_{r,\prec}(x)\cap O$).  Finally, vertex $z$ belongs to $C_v$ for
at most one $v\in A\setminus (O\cup O')$.  Therefore, $z$ belongs to $C_v$ for at most $b+b(m-1)+1$ vertices $v\in A\cup O\cup O'$.
Therefore, $\CC$ is $(bm+1,2r)$-shallow.
\end{proof}

We are now ready to prove the main result.
\begin{proof}[Proof of Theorem~\ref{thm-main}]
Without loss of generality, assume that $\varepsilon\le 1$.  Let $r$ be the diameter of $\pi$.
By Theorem~\ref{thm-zhu}, there exists $b$ such that $\wcol_r(G)\le b$ for every $G\in \GG$.
Let $m=\lceil 3b/\varepsilon\rceil$.  Let $\GG'$ be the class of graphs with polynomial expansion
given by Corollary~\ref{cor-preserve} with $\omega=bm+1$ and $t=2r$.  Let $p$ be the polynomial
from Lemma~\ref{lemma-break} for this class $\GG'$.
Let us define $c=\lceil p(18/\varepsilon)\rceil$.

Consider any graph $G\in\GG$.  Let $O$ be a $\pi$-hitting set in $G$ of size $\gamma_\pi(G)$, and let $A$ be the $\pi$-hitting
set returned by the $c$-local search algorithm.  Let $O'$ be the set of all $(r,\prec,m,O)$-rich vertices in $G$;
by Observation~\ref{obs-fewrich}, we have $|O'|\le \tfrac{b}{m}|O|\le \tfrac{\varepsilon}{3}|O|$.
Let $\CC=\CC_{r,\prec,m,O,A}$; by Lemma~\ref{lemma-main}, $\CC$ is $(bm+1,2r)$-shallow, and
by Corollary~\ref{cor-preserve}, the packing graph $H=G[\CC]$ belongs to $\GG'$.  For any $Y\subseteq V(H)$,
let $\overline{Y}=\{v\in V(G):C_v\in Y\}$.  By Lemma~\ref{lemma-break},
there exists a cover $\KK$ of $H$ with excess at most $\tfrac{\varepsilon}{18}|V(H)|\le \tfrac{\varepsilon}{18}(|A|+|O|+|O'|)\le \tfrac{\varepsilon}{6}|A|$
such that each element of $\KK$ has size at most $p(18/\varepsilon)\le c$.

Consider any $K\in \KK$, and let $A'=(A\setminus (\overline{K\setminus\partial K}))\cup ((O\cup O')\cap \overline{K})$.
We claim that $A'$ is a $\pi$-hitting set.  Indeed, suppose for a contradiction that there exists $Z\subseteq V(G)$ such
that $(G,Z)$ satisfies the property $\pi$, but $A'\cap Z=\emptyset$.  Since $A$ is a $\pi$-hitting set, there exists $v\in A\cap Z$,
and since $v\not\in A'$, we have $v\in (\overline{K\setminus\partial K})\setminus (O\cup O')$.
Since $O$ is a $\pi$-hitting set, we have $Z\cap O\neq\emptyset$,
and by Lemma~\ref{lemma-main} applied to $F=G[Z]$, there exists $u\in Z\cap (O\cup O')$ such that $C_uC_v\in E(H)$.
Since $\KK$ is a cover of $H$, there exists $K'\in \KK$ such that $C_uC_v\in E(H[K'])$.
Since $C_v\in K\setminus\partial K$, it follows that $K'=K$, and thus this implies $C_u\in K$.
Therefore, we have $u\in (O\cup O')\cap \overline{K}$, and $u\in A'\cap Z$, which is a contradiction.

Note that $A\triangle A'\subseteq \overline{K}$, and thus $|A\triangle A'|\le c$.  Hence, the $c$-local search algorithm
considered the solution $A'$ and did not improve $A$ to $A'$, which implies that $|A'|\ge |A|$.
We conclude that
$$|(O\cup O')\cap \overline{K}|\ge |A\cap (\overline{K\setminus\partial K})|$$
for every $K\in \KK$.  Therefore, denoting by $s$ the excess of $\KK$, we have
\begin{align*}
|A|&\le \sum_{K\in\KK} |A\cap \overline{K}|\le s+\sum_{K\in\KK} |A\cap (\overline{K\setminus\partial K})|\\
&\le s+\sum_{K\in\KK} |(O\cup O')\cap \overline{K}|\le 2s+\sum_{K\in\KK} |(O\cup O')\cap (\overline{K\setminus\partial K})|\\
&\le 2s+|O|+|O'|\le \tfrac{\varepsilon}{3}|A|+(1+\varepsilon/3)|O|.
\end{align*}
Therefore,
$$|A|\le \frac{1+\varepsilon/3}{1-\varepsilon/3}|O|\le (1+\varepsilon)|O|,$$
as required.
\end{proof}

The proof for the packing version is much simpler.
\begin{proof}[Proof of Theorem~\ref{thm-mainpack}]
Without loss of generality, assume that $\varepsilon\le 1$.  Let $r$ be the diameter of $\pi$.
Let $\GG'$ be the class of graphs with polynomial expansion given by Corollary~\ref{cor-preserve}
with $\omega=2$ and $t=r$.  Let $p$ be the polynomial from Lemma~\ref{lemma-break} for this class $\GG'$.
Let us define $c=\lceil p(4/\varepsilon)\rceil$.

Consider any graph $G\in\GG$.  Let $O$ be an (induced) $\pi$-packing in $G$ of size $\alpha_\pi(G)$ (or $\alpha'_\pi(G)$),
and let $A$ be the (induced) $\pi$-packing returned by the $c$-local search algorithm. 
Let $\CC=A\cup O$, and note that $\CC$ is $(2,r)$-shallow.  By Corollary~\ref{cor-preserve}, the packing graph $H=G[\CC]$
belongs to $\GG'$.  By Lemma~\ref{lemma-break}, there exists a cover $\KK$ of $H$ with excess at most
$\tfrac{\varepsilon}{4}|V(H)|\le \tfrac{\varepsilon}{4}(|A|+|O|)\le \tfrac{\varepsilon}{2}|O|$
such that each element of $\KK$ has size at most $p(4/\varepsilon)\le c$.

Consider any $K\in \KK$, and let $A'=(A\setminus K)\cup (O\cap (K\setminus \partial K))$.
We claim that $A'$ is an (induced) $\pi$-packing.  Indeed, since both $A$ and $O$ have this property, the only
way how this could be false is if there existed $P\in A\setminus K$ and $P'\in O\cap (K\setminus \partial K)$
that intersect (or contain adjacent vertices in the induced case).  However, then $PP'\in E(H)$, which
is a contradiction since $P'\in K\setminus \partial K$ and $P\not\in K$.
Observe also that $|A'|\le |A|$, as otherwise the $c$-local search algorithm
would extend $A\setminus K$ to a solution larger than $A$ (if $|A'|>|A|$, then $K\not\subseteq A$,
and thus $A\setminus K$ is obtained from $A$ by deleting less than $c$ elements).

We conclude that
$$|A\cap K|\ge |O\cap (K\setminus\partial K)|$$
for every $K\in \KK$.  Therefore, denoting by $s$ the excess of $\KK$, we have
\begin{align*}
|A|&\ge \sum_{K\in\KK} |A\cap (K\setminus \partial K)|\ge -s+\sum_{K\in\KK} |A\cap K|\\
&\ge -s+\sum_{K\in\KK} |O\cap (K\setminus \partial K)|\ge -2s+\sum_{K\in\KK} |O\cap K|\\
&\ge -2s+|O|\ge (1-\varepsilon)|O|,
\end{align*}
as required.

\end{proof}

\section*{Acknowledgements}

I would like to thank Abhiruk Lahiri for discussions about these problems.

\bibliographystyle{acm}
\bibliography{../data.bib}

\end{document}